\pgfplotsset{grid style={red}}
      \tikzstyle{blockzm1} = [rectangle, draw, fill=white, 
      \tikzstyle{blockzm2} = [rectangle, draw, fill=white, 
      \tikzstyle{blockzm1r} = [rectangle, draw, fill=red!70, 
      \tikzstyle{blockzm1o} = [rectangle, draw, fill=orange!80, 
      \tikzstyle{blockzm1y} = [rectangle, draw, fill=yellow!50, 
      \tikzstyle{blockzm1g} = [rectangle, draw, fill=green!60, 
        \tikzstyle{blockzm1p} = [rectangle, draw, fill=blue!50, 
   \tikzstyle{blockz} = [rectangle, draw, fill=white, 
     \tikzstyle{blockzflexi} = [rectangle, draw, fill=white, 
    \tikzstyle{blockz2} = [rectangle, draw, fill=white, 
        \tikzstyle{blockzL} = [rectangle, draw, fill=white, 
       \tikzstyle{vertex}=[circle,fill=black!25,minimum size=20pt,inner sep=0pt]
\tikzstyle{selected vertex} = [vertex, fill=red!24]
\tikzstyle{edge} = [draw,thick,-]
\tikzstyle{weight} = [font=\small]
\tikzstyle{selected edge} = [draw,line width=5pt,-,red!50]
\tikzstyle{ignored edge} = [draw,line width=5pt,-,black!20]
\tikzset{
  treenode/.style = {align=center, inner sep=0pt, text centered,
    font=\sffamily},
  arn_n/.style = {treenode, circle, white, font=\sffamily\bfseries, draw=black,
    fill=black, text width=1.5em},
  arn_r/.style = {treenode, circle, red, draw=red, 
    text width=1.5em, very thick},
  arn_x/.style = {treenode, rectangle, draw=black,
    minimum width=0.5em, minimum height=0.5em}
}
\newcolumntype{M}{>{\begin{varwidth}{3cm}}l<{\end{varwidth}}}
\numberwithin{algocf}{section}
\theoremstyle{definition}
\newtheorem{dfn}[algocf]{Definition}
\newtheorem{prob}[algocf]{Problem}
\theoremstyle{definition}
\newtheorem{exa}[algocf]{Example}
\theoremstyle{definition}
\newtheorem{rem}[algocf]{Remark}
\newtheorem{cexa}[algocf]{Counterexample}
\theoremstyle{plain}
\newtheorem{lem}[algocf]{Lemma}
\DeclarePairedDelimiter\ceil{\lceil}{\rceil}
\newcommand{\C}{\mathcal{C}}
\newcommand{\Child}{\mathrm{Children}}
\newcommand{\Desc}{\mathrm{Descendants}}
\newcommand{\R}{\mathbb{R}}
\newcommand{\Z}{\mathbb{Z}}
\newcommand{\I}{\mathcal{I}}
\newcommand{\T}{\mathcal{E}}
\newcommand{\bs}{\hfill $\blacksquare$}
\newcolumntype{M}{>{\begin{varwidth}{3cm}}l<{\end{varwidth}}}
\title{Counterexamples expose gaps in the proof of time complexity for cover trees introduced in 2006}
\author{Yury Elkin\thanks{e-mail: yura.elkin@gmail.com}\\ %
        \scriptsize 
        Department of Computer Science, University of Liverpool %
\and Vitaliy Kurlin\thanks{e-mail: vitaliy.kurlin@gmail.com}\\ %
     \scriptsize Department of Computer Science, University of Liverpool.} 
\abstract{
This paper is motivated by the $k$-nearest neighbors search:
given an arbitrary metric space, and its finite subsets (a reference set $R$ and a query set $Q$), design a fast algorithm to find all $k$-nearest neighbors in $R$ for every point $q\in Q$.
In 2006, Beygelzimer, Kakade, and Langford introduced cover trees to justify a near-linear time complexity for the neighbor search in the sizes of $Q,R$. 
\medskip

\noindent
Section~5.3 of Curtin's PhD (2015) pointed out that the proof of this result was wrong. 
The key step in the original proof attempted to show that the number of iterations can be estimated by multiplying the length of the longest root-to-leaf path in a cover tree by a constant factor.
However, this estimate can miss many potential nodes in several branches of a cover tree, that should be considered during the neighbor search.
The same argument was unfortunately repeated in several subsequent papers using cover trees from 2006.
\medskip

\noindent
This paper explicitly constructs challenging datasets that provide counterexamples to the past proofs of time complexity for the cover tree construction, the $k$-nearest neighbor search presented at ICML 2006, and the dual-tree search algorithm published in NIPS 2009.
\medskip

\noindent
The corrected near-linear time complexities with extra parameters are proved in another forthcoming paper by using a new compressed cover tree simplifying the original tree structure.
} 
\begin{document}



\maketitle


\section{Introduction: neighbor problem and past work}

\begin{table*}
	\centering
	\caption{Results for building structures in terms of the expansion constant $c(R)$ in Definition \ref{dfn:expansion_constant} or KR-type constant $2^{\text{dim}_{KR}}$ in \cite[Section~2.1]{krauthgamer2004navigating}}
	\label{table:KR:construction}
	\begin{tabular}{|V{4.0cm}|V{5.5cm}|V{2.5cm}|V{4.0cm}|}
		\hline
		Data structure, reference     & time complexity   & space & proofs \\
		\hline
		Navigating nets \cite{krauthgamer2004navigating} & $O\big(2^{O(\text{dim}_{KR})} \cdot |R| \log(|R|) \log(\log(|R|) )\big)$, \cite[Theorem~2.6]{krauthgamer2004navigating} & $O(2^{O(\text{dim})}|R|)$ & Not available \\
		\hline
		Cover tree \cite{beygelzimer2006cover}   &  $O(c(R)^{O(1)} \cdot |R| \cdot \log(|R|))$, \cite[Theorem~6]{beygelzimer2006cover} & $O(|R|)$ & Counterexample \ref{cexa:construction_algorithm_of_original_cover_tree} shows that the past proof is incorrect \\ 
		\hline
		Compressed cover tree \cite{2205.10194} &     $O\big(c(R)^{O(1)} \cdot |R| \cdot \log(R) \big)$  &  $O(|R|)$    & \cite[Theorem~3.52]{2205.10194}. \\
		\hline       
	\end{tabular}
\end{table*}

\begin{table*}
	\centering
	\caption{Results for finding exact all $k$-nearest neighbors of one query point $q \in Q$ in terms of the expansion constant $c(R)$ in Definition \ref{dfn:expansion_constant} or KR-type constant $2^{\text{dim}_{KR}}$ in \cite[Section~2.1]{krauthgamer2004navigating}, assuming that all data structures are already built.}
	\label{table:KR:knearest}
	\begin{tabular}{|V{4.0cm}|V{5cm}|V{3cm}|V{4.0cm}|}
		\hline
		Data structure, reference     & time complexity   & space  & proofs \\
		\hline
		Navigating nets \cite{krauthgamer2004navigating} & $O\big(2^{O(\text{dim}_{KR})}(\log(|R|) + k)\big)$ for $k\geq 1$ \cite[Theorem~2.7]{krauthgamer2004navigating} & $O(2^{O(\text{dim})} \cdot |R|)$ & Not available \\
		\hline
		Cover tree \cite{beygelzimer2006cover}   &  $O\big(c(R)^{O(1)}\log(|R|)\big)$ for $k=1$, \cite[Theorem~5]{beygelzimer2006cover} & $O(|R|)$ & Counterexample \ref{cexa:original_all_nearest_neighbors_algorithm} shows that the past proof is incorrect \\ 
		\hline
		Compressed cover tree \cite{2205.10194} &     $O\big(c(R)^{O(1)} \cdot \log(k) \cdot (\log(|R|) + k)\big)$  & $O(|R|)$     & \cite[Theorem~3.84]{2205.10194} \\
		\hline   
		Dual cover tree \cite{ram2009linear}   &  $O\big(c(R)^{O(1)} \cdot c(Q)^{O(1)} \cdot |R|)\big)$ for large set $Q$, \cite[Theorem~3.1]{ram2009linear} & $O(|R|)$ & Counterexample \ref{cexa:dualtreeproof} shows that the past proof is incorrect \\ 
		\hline 
	\end{tabular}
\end{table*}

The search for nearest neighbors was one of the first data-driven problems and led to the neighbor rule for classification 
\cite{cover1967nearest}.
\medskip

\noindent
In a modern formulation, the problem is to find all nearest neighbors in a reference set $R$ for all points from a query set $Q$.
Both sets live in an ambient space $X$ with a distance $d$ satisfying all metric axioms.
The simplest example is $X=\R^n$ with the Euclidean metric, where a query set $Q$ can be a single point or a subset of a larger set~$R$.

\begin{prob}[all nearest neighbors search]
	\label{pro:knn}
Let $Q,R$ be finite subsets of query and reference points in a metric space $(X,d)$.
For any integer $k\geq 1$ and every $q\in Q$, design an algorithm to exactly find a point $p$ minimizing $d(q,p)$ so that the parametrized worst-case time complexity is near-linear in $\max\{|Q|,|R|\}$, where hidden constants may depend on structures of $Q,R$ but not on their sizes $|Q|,|R|$. 
\bs
\end{prob}

\noindent
\textbf{Spacial data structures.}
It is well known that the time complexity of a brute-force approach of finding all 1st nearest neighbors of points from $Q$ within $R$ is proportional to the product $|Q|\cdot|R|$ of the sizes of $Q, R$.
Already by the mid of 1970s real data was big enough to motivate faster algorithms and sophisticated data structures.
\medskip

\noindent
One of the first spacial data structure, a \emph{quadtree} \cite{finkel1974quad}, hierarchically indexes a reference set $R\subset\R^2$ by subdividing its bounding box (a root) into four smaller boxes (children), which are recursively subdivided until final boxes (leaf nodes) contain only a small number of reference points.
A generalization of the quadtree to $\R^n$ exposes an exponential dependence of its computational complexity on $n$, because the $n$-dimensional box is subdivided into $2^n$ smaller boxes.
\medskip

\noindent
The first attempt to overcome this dimensionality curse was the $kd$-tree \cite{bentley1975multidimensional} that subdivides a subset of the reference set $R$ at every  recursion step into two subsets instead of $2^n$ subsets.
\medskip

\noindent
Then more advanced algorithms utilizing spatial data structures have positively impacted various related research areas such as a minimum spanning tree \cite{bentley1978fast}, range search \cite{pelleg1999accelerating},  $k$-means clustering  \cite{pelleg1999accelerating}, and ray tracing \cite{fussell1988fast}.
The spacial data structures for finding nearest neighbors in the chronological order are $k$-means tree \cite{fukunaga1975branch}, $R$ tree \cite{beckmann1990r}, ball tree \cite{omohundro1989five}, $R^*$ tree \cite{beckmann1990r}, vantage-point tree \cite{yianilos1993data}, TV trees \cite{lin1994tv}, X trees \cite{berchtold1996x}, principal axis tree \cite{mcnames2001fast}, spill tree \cite{liu2004investigation}, cover tree \cite{beygelzimer2006cover}, cosine tree \cite{holmes2008quic}, max-margin tree \cite{ram2012nearest}, cone tree \cite{ram2012maximum}.
\medskip

\noindent
\textbf{Expansion constant}. 
The past work starting from \cite{karger2002finding} expressed the time complexities of neighbor search in terms of a dimensionality constant for a finite metric space $X$.
This constant was denoted by $2^{\text{dim}_{KR}}$ in \cite[Section~2.1]{krauthgamer2004navigating} and by $c$ in
\cite[Section~1]{beygelzimer2006cover}.
In any metric space $X$, let $\bar B(p,t)\subseteq X$ be the closed ball with a center $p$ and a radius $t$.
Let $|\bar B(p,t)|$ be the number (if finite) of points in $\bar B(p,t)$. 

\begin{dfn}[expansion constant $c(R)$, {\cite[Definition~3.4]{2205.10194}}]
	\label{dfn:expansion_constant}
	Let $R$ be a finite set in a metric space $X$. 
	The \emph{expansion constant} $c(R)$ is the smallest $c(R)\geq 2$ such that $|\bar{B}(p,2t)|\leq c(R) \cdot |\bar{B}(p,t)|$ for any point $p\in R$ and radius $t\geq 0$. 
\end{dfn}

\noindent
Typically, uniformly distributed datasets have small expansion constants. Using arguments of \cite[Section~4.3]{2205.10194} it can be shown that if $R$ is a uniformly distributed point cloud of $\R^{m}$ we have $c(R) = 2^{m}$. 
However, if a dataset contains even a single outlier, say $R = \{1,2,3,\dots,m,2m\}$, then $c(R) = |R|$. 
\medskip

\noindent
The data structures described below were designed to justify a near-linear time complexity for finding $k$-nearest neighbors. 
\medskip

\noindent
\textbf{Navigating nets}. In 2004 a new data structure was introduced that was a sequence of
progressively finer $\epsilon$-nets on the dataset $R$. In \cite[Theorem~2.7]{krauthgamer2004navigating}
it was claimed that all $k$-nearest neighbors of a query point $q$ are found by navigating nets in time $2^{O(\text{dim}_{KR}(R \cup \{q\})}(k + \log|R|)$, where $\text{dim}_{KR}(R \cup \{q\})$ is an expansion rate of \cite[Section~1.2]{krauthgamer2004navigating}.
All proofs and pseudocodes were omitted. The authors did not reply to our request for details. 
\medskip

\noindent
\textbf{Modified navigating nets} \cite{cole2006searching} were used in 2006 to claim the worst-case time complexity $O(\log(|R|) + (1/\epsilon)^{O(1)})$ for finding the first $(1+\epsilon)$-approximate neighbor parameterized by a constant that depends on a doubling dimension of the ambient space. 
However, only sketch of proof of this result was given. 
\medskip

\noindent
\textbf{Cover trees}. 
In 2006, \cite{beygelzimer2006cover} introduced a cover tree inspired by the navigating nets \cite{krauthgamer2004navigating}. 
This cover tree was designed to prove a worst-case time complexity in the size $|R|$ and the expansion constant $c$ from Definition \ref{dfn:expansion_constant}. 
In particular, \cite[Theorem~5]{beygelzimer2006cover} claimed that cover trees help solve Problem \ref{pro:knn} for $k = 1$ could be solved in $O(c^{12}\log|R|)$ time.

\medskip
\noindent 
\textbf{Further studies in cover trees.} A noteworthy paper on cover trees \cite{kollar2006fast} introduced a new probabilistic algorithm for the nearest neighbor search, as well as corrected the pseudo-code of the cover tree construction algorithm of \cite[Algorithm~2]{beygelzimer2006cover}.
Later in 2015 and 2022, new, more efficient implementations of cover tree were introduced in \cite{izbicki2015faster} and \cite{Sheehy2022}. However, no new time-complexity results were proven. A study \cite{jahanseir2016transforming} explored connections between modified navigating nets \cite{cole2006searching} and cover trees \cite{beygelzimer2006cover}.
 Multiple papers \cite{beygelzimer2006coverExtend}, \cite{ram2009linear}, \cite{curtin2015plug} studied possibility of solving Problem \ref{pro:knn} by using cover tree on both, the query set and the reference set, for further details see Section~\ref{sec:challenges_paired_tree}.

\medskip

\noindent
\textbf{Past challenges}. 
In 2015, Curtin's PhD \cite[section~5.3]{curtin2015improving} pointed out that the proof of \cite[Theorem~5]{beygelzimer2006cover} had a mistake.  It was incorrectly claimed that the number of performed iterations of the nearest neighbors algorithm \cite[Algorithm~1]{beygelzimer2006cover} can be bounded by multiplying the depth of cover tree by some constant factor. 
This claim is false because many potential branches at different levels of a cover tree can be missed. 
The similar mistake was repeated in proof of time complexity method of Insert() method \cite[Theorem~6]{beygelzimer2006cover}, as well as in several subsequent papers: for a dual-tree based all-nearest neighbor search \cite[Theorem~3.1]{ram2009linear}, for a Minimum Spanning Tree \cite[Theorem~5.1]{march2010fast}, for a fast exact max-kernel search \cite[Lemma~5.2]{curtin2013fast}. 
\medskip

\noindent
\textbf{Counterexamples}.
To confirm the discovery of Ryan Curtin \cite[section~5.3]{curtin2015improving}, Example~\ref{exa:tall_imbalanced_tree} will describe a finite metric space $(R,d)$ and its cover tree $\T(R)$, for which the maximal root-to-node path is bounded by $O(\sqrt{|R|})$, but that forces both Algorithm~1 and Algorithm~2 of \cite{beygelzimer2006cover} iterate over all $|R|$ levels of the cover tree $\T(R)$. 
The contradiction will follow by noting that the number of iterations in one particular example has a lower bound $|R|$ despite the claimed upper bound $O(\sqrt{|R|})$ for all datasets. 
\medskip

Here is the summary of the found counterexamples:
\begin{itemize}
\item Counterexample~\ref{cexa:construction_algorithm_of_original_cover_tree} to the proof of \cite[Theorem~2]{beygelzimer2006cover}, 
\item Counterexample~\ref{cexa:original_all_nearest_neighbors_algorithm} to the proof of \cite[Theorem~1]{beygelzimer2006cover},
\item Counterexample \ref{cexa:dualtreeproof} to the proof of \cite[Theorem~3.1]{ram2009linear}.
\end{itemize}

\noindent
Counterexamples for the time complexity of a minimum spanning tree \cite[Theorem~5.1]{march2010fast} can be found in \cite[Section~4.2]{2205.10194}.
\medskip

\noindent
\textbf{New results and compressed cover trees}.
All issues of past approaches are resolved in \cite[Chapter~3]{2205.10194} by defining a new data structure, a compressed cover tree \cite[Definition~3.5]{2205.10194}, which combines the explicit and implicit cover tree into a single simpler structure.
Near-linear time algorithms with new parameters for building a  compressed cover tree and for finding $k$-nearest neighbors are described in \cite[Algorithm~3.5.3]{2205.10194} and \cite[Algorithm~3.7.2]{2205.10194}. 
\medskip

\noindent
To overcome the past issues, we estimate the number of iterations in
\cite[Algorithm~3.5.4]{2205.10194} and \cite[Algorithm~3.7.2]{2205.10194} in 
\cite[Lemma~3.5.9]{2205.10194} and \cite[Lemma~3.7.13]{2205.10194} , respectively.
In \cite[Corollary~3.5.11]{2205.10194} it is shown that a compressed cover tree can be constructed in time $O(c(R)^{10}\log(|R|)|R|)$ and \cite[Theorem~3.7.14]{2205.10194} shows that $k$ nearest neighbors of any point $q$ can be found in time  $$O(c(R \cup \{q\})^3 \cdot \log_2(k) \cdot (c(R \cup \{q\})^9 \cdot \log_2(|R|) + k)).$$
Tables \ref{table:KR:construction} and \ref{table:KR:knearest} summarize all known cover tree methods and their contributions for $k$-nearest neighborhood search into two tables. 
\medskip

\section{Original cover trees introduced in 2006}
\label{sec:original_cover_trees}

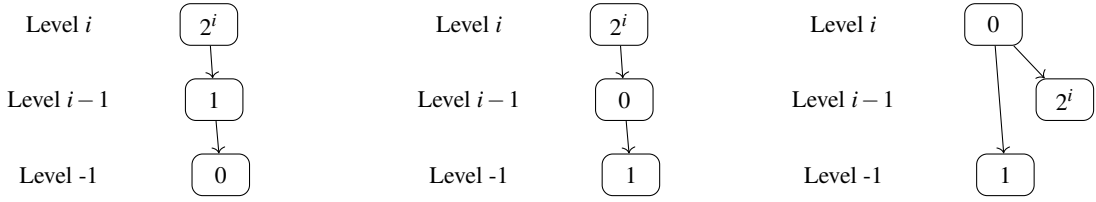
\begin{figure*}
\centering
   \begin{subfigure}{.30\textwidth}
  \centering
  \begin{tikzpicture}[align=center, node distance = 1.0cm, scale = 0.45]

	\node (scalem2text) {Level $i$};
	\node[below of =scalem2text] (scalem1text) {Level $i-1$};
	\node[below of =scalem1text] (scalem0text) {Level -1};
	\node [blockz,  right=30pt of scalem2text ] (node1) {$2^{i}$};
    \node [blockz,  right=25pt of scalem1text ] (node2) {1};
	\node [blockz,  right=32pt of scalem0text ] (node3) {0};
	

	  \draw[->] (node1) -> (node2);
	   \draw[->] (node2) -> (node3);

\end{tikzpicture}
  \label{fig:cover_tree_variant_one}

\end{subfigure}
\begin{subfigure}{.30\textwidth}
  \label{fig:cover_tree_variant_two}
  \centering
  \begin{tikzpicture}[align=center, node distance = 1.0cm, scale = 0.45]

	\node (scalem2text) {Level $i$};
	\node[below of =scalem2text] (scalem1text) {Level $i-1$};
	\node[below of =scalem1text] (scalem0text) {Level -1};
	\node [blockz,  right=30pt of scalem2text ] (node1) {$2^{i}$};
    \node [blockz,  right=25pt of scalem1text ] (node2) {0};
	\node [blockz,  right=32pt of scalem0text ] (node3) {1};
	

	  \draw[->] (node1) -> (node2);
	   \draw[->] (node2) -> (node3);

\end{tikzpicture}
\end{subfigure}
\begin{subfigure}{.30\textwidth}
  \centering
  \begin{tikzpicture}[align=center, node distance = 1.0cm, scale = 0.45]

	\node (scalem2text) {Level $i$};
	\node[below of =scalem2text] (scalem1text) {Level $i-1$};
	\node[below of =scalem1text] (scalem0text) {Level -1};
	\node [blockz,  right=30pt of scalem2text ] (node1) {0};
    \node [blockz,  right=50pt of scalem1text ] (node2) {$2^{i}$};
	\node [blockz,  right=32pt of scalem0text ] (node3) {1};
	

	  \draw[->] (node1) -> (node2);
	   \draw[->] (node1) -> (node3);

\end{tikzpicture}
  \label{fig:cover_tree_variant_three}
\end{subfigure}
\caption{
Compressed representations of three different implicit cover trees $\I(R)$ built on same set $R = \{0,1,2^{i}\}$.
See Definition \ref{dfn:cover_tree_implicit}.
}
\label{fig:cover_tree_easy_example}
\end{figure*}

To resolve Problem \ref{pro:knn} effectively \cite{beygelzimer2006cover} introduced a new data structure, cover tree, the idea of which was to encode data of the reference set $R$ into a leveled tree. Using this tree, a new algorithm \cite[Algorithm~1]{beygelzimer2006cover} was introduced, which was used to find the nearest neighbor of a given query point $q$. The idea was to travel from the root node of the tree, located on the highest level towards the leaf nodes on the lowest level, memorizing the current best candidate for the nearest neighbor and eliminating the branches, which were clearly too far from the query point. Compared to the brute-force search, the benefit of this procedure is that we avoid computing the distance of a query point to a large number of points which are eliminated in large batches during the search.

\medskip

\noindent
Implicit and explicit cover trees are visualizations of finite metric spaces, that were discovered in \cite[Section~2]{beygelzimer2006cover}. However, only the definition of the implicit cover tree was formally stated.

\begin{dfn}[Implicit cover tree $\I(R)$, {\cite[Section~2]{beygelzimer2006cover}}]
	\label{dfn:cover_tree_implicit}
	Let $R$ be a finite set in a metric space $(X,d)$. 
	\emph{An implicit cover tree} $\I(R)$ is a tree on a subset of $R \times \Z \cup \{-\infty, +\infty\}$ with a root $r \in R$ and a \emph{level} function $l : R \rightarrow \Z$ satisfying the conditions below.
	\medskip
	
	\noindent
	(\ref{dfn:cover_tree_implicit}a)
	\emph{Root condition} :  
	The level of the root node $r$ is $l(r) = \infty$. 
	\medskip
	
	\noindent
	(\ref{dfn:cover_tree_implicit}b)
	\emph{Node condition} :  
	For all points $p \in R$ and for all indices $i \in (-\infty, l(p)+1)$
	there exists a node $(p,i)$ in the tree $\I(R)$.
	\medskip
	
	\noindent
	(\ref{dfn:cover_tree_implicit}c)
	\emph{Covering condition} : 
	for every node $(q,i) \in \I(R)$ there exists a \emph{parent} $(p,i+1) \in \I(R)$ 
     such that $d(q,p) \leq 2^{i+1}$, this parent node $p$ has a single link to its  \emph{child} node $q$ in the tree $\I(R)$. 
	\medskip
	
	\noindent
	(\ref{dfn:cover_tree_implicit}d)
	\emph{Separation condition} : 
	for $i \in \Z$ and the \emph{cover set} 
	$C_i = \{p \in R \mid l(p) \geq i\}$,
	the minimum inter-point distance $d_{\min}(C_i) = \min\limits_{p \in C_{i}}\min\limits_{q \in C_{i}\setminus \{p\}} d(p,q)$ is larger than $2^{i}$.
	\medskip
	
	\noindent
 	For any node $p\in\I(R)$, $\Child(p,i)$ denotes the set consisting of all children of the node $(p,i)$, including the node $(p,i-1)$ on the level below.
	For any node $p \in\I(R)$, define the \emph{node-to-root} path as a unique sequence of nodes $w_0,\dots,w_m$ such that $w_0 = p$, $w_m$ is the root and $w_{j+1}$ is the parent of $w_{j}$ for $j=0,...,m-1$. 
	A node $q \in\I(R)$ is a \emph{descendant} of a node $p$ if $p$ is in the node-to-root path of $q$. 
	A node $p$ is an \emph{ancestor} of $q$ if $q$ is in the node-to-root path of $p$. 
	Let $\Desc(p,i)$ be the set of all descendants of node $(p,i)$, including $(p,i-1)$. 
	\bs
\end{dfn}

\noindent
The explicit cover tree is obtained from an implicit cover tree by collapsing into a single node all nodes from any infinite non-branched path $(p,i)\to(p,i-1)\to\cdots \to(p,-\infty)$, see the left and middle pictures of Fig.~\ref{fig:tripleexample}, as formalized below.

\begin{dfn}[Explicit cover tree $\T(R)$, {\cite[Section~2]{beygelzimer2006cover}}]
\label{dfn:cover_tree_explicit}
Let $R$ be a finite set in a metric space $(X,d)$. Let $\I(R)$ be implicit cover tree of Definition \ref{dfn:cover_tree_implicit}. An \emph{explicit cover tree} $\T(R)$ is a quotient tree $\I(R) / \backsim$, where $(p,i) \backsim (q,j)$, if $p = q$ and $\Child(p,t)$ consist of the nodes $(p, t-1)$ for all $t \in [\min(i,j)+1, \max(i,j)]$. 
\end{dfn}

\noindent
Since nodes containing different points are never glued together, we denote an arbitrary node of explicit cover tree $\T(R)$ by $(p,[i])$, where $p \in R$ is the point stored in the node and $[i]$ is equivalence class of $(p,i)$ in $[\backsim]$. 

\begin{exa}[Three point example]
Let $R = \{0,1,2^{i}\}$ for some large $i \in \Z_{+}$ and let $d(x,y) = |x-y|$ be the Euclidean metric on $\R$. 
There are multiple ways to construct an implicit cover tree $\I(R)$. Assume that $2^{i}$ is chosen to be the root node. Then $\I(R)$ will contain an infinite chain $\{(2^{i},j) \mid j \in \Z\}$, in such a way that for all $j$ node $(2^{i},j) $ is parent of $(2^{i},j-1)$.

\medskip
\noindent
Let us now insert points $\{0,1\}$. Since $d(2^{i},1) = 2^{i}-1$ and $d(2^{i},0) = 2^{i}$, by conditions (\ref{dfn:cover_tree_implicit}b) and (\ref{dfn:cover_tree_implicit}c) either $l(0) = i-1$ or $l(1) = i-1$. Let us choose $l(0) = i-1$, then $\I(R)$ will contain chain $\{(0,j) \mid j \in (-\infty, i-1] \cap \Z\}$ in its vertex set, where $(0,j)$ will be parent of $(0,j-1)$ for all $j \in (-\infty, i-1]$ and $(2^{i},i)$ will be parent of $(0,i-1)$. Since $d(0,1) = 1$ and point $0$ minimizes the distance $d(1, \{0,2^{i}\})$ we have $l(1) = -1$. Therefore $\I(R)$ will contain chain $\{(1,j) \mid j \in (-\infty, -1] \cap \Z\}$ and $(1,-1)$ will be child of $(0,0)$.

\medskip
\noindent
The compressed representation of $\I(R)$ is illustrated in Figure~\ref{fig:cover_tree_easy_example} (middle). Explicit cover tree $\T(R)$ consists of nodes: $(2^{i}, [i])$, $(2^{i}, [i-1])$, $(0,[i-1])$, $(0,[-1])$, $(1,[-1])$,
where 
\begin{itemize}
    \item $(2^{i}, [i])$ has two children $(2^{i}, [i-1])$,  $(0,[i-1])$ on level $i-1$.
    \item $(0,[i-1])$ has two children $(0, [-1])$,  $(1,[-1])$ on level $-1$.
    \item No other children are present. 
\end{itemize}

\end{exa}

\begin{exa}[a short train line tree]
	\label{exa:implicitexplicitexample}
	Let $G$ be the unoriented metric graph consisting of two vertices $r,q$ connected by three different edges $e,h,g$ of lengths $|e| = 2^6$ , $|h| = 2^{3}$ , $|g| = 1$. Let $p_{4}$ be the middle point of the edge $e$. 
	Let $p_{3}$ be the middle point of the subedge $(p_4 , q)$. 
	Let $p_{2}$ be the middle point of the edge $h$.
	Let $p_{1}$ be the middle point of the subedge $(p_{2}, q)$. 
	Let $R = \{p_1, p_2,p_3,p_4,r\}$. 
	We construct an implicit cover tree $\I(R)$ by choosing the level $l(p_i) = i$ and by setting the root to be $r$.
	Then $\I(R)$ satisfies all the conditions of Definition \ref{dfn:cover_tree_implicit}, see a comparison of the three cover trees in Fig.~\ref{fig:tripleexample}.
	\bs
\end{exa}

\section{Challenging datasets for original cover trees}
\label{sec:challenging_datasets}

In this section Example \ref{exa:tall_imbalanced_tree} introduces a dataset $R$ and its cover tree $\I(R)$, which will be used to show that key steps in proofs of time complexity estimates of cover tree construction algorithm \cite[Theorem~5]{beygelzimer2006cover} and the nearest neighbor search algorithm \cite[Theorem~6]{beygelzimer2006cover} are incorrect. Since the same false arguments were later repeated in the papers \cite{march2010fast} and \cite{ram2009linear}, 
we provide a detailed counterexamples in Sections \ref{sec:tree_construction} and \ref{sec:nn_glich} that expose the contradiction within each of the proof of the theorems. 

\begin{figure*}
	\centering
	\tikzset{markpos/.style args={#1 at #2}{decoration={
  markings,
  mark=at position #2 with {\coordinate(#1);}},postaction={decorate}}}
  
\begin{tikzpicture}
\node[circle,fill=black,inner sep=2pt,draw,label = below:{$r$}] (a) at (180:6cm) {};
\node[circle,fill=black,inner sep=2pt,draw,label = below:{$q$}] (b) at (0:6cm) {};
\draw[thick] (a) edge[bend left=90, markpos=mymark1 at 0.5] (b);
\draw[thick] (a) edge[bend left=90, markpos=edgemark1 at 0.30] (b);
\draw[thick] (a) edge[bend left=90, markpos=anothermark1 at 0.75] (b);

\draw[thick] (a) edge[bend left=50, markpos=mymark2 at 0.5] (b);
\draw[thick] (a) edge[bend left=50, markpos=edgemark2 at 0.25] (b);
\draw[thick] (a) edge[bend left=50, markpos=anothermark2 at 0.75] (b);

\draw[thick] (a) edge[bend left=13, markpos=mymark3 at 0.5] (b);
\draw[thick] (a) edge[bend left=13, markpos=edgemark3 at 0.33] (b);
\draw[thick] (a) edge[bend left=13, markpos=anothermark3 at 0.75] (b);

\draw[thick] (a) edge[markpos=edgemark4 at 0.33] (b);

\node[circle,fill=black,inner sep=2pt,draw,label = {$p_{m^2}$}] (pm1) at (mymark1) {};
\node[circle,fill=black,inner sep=2pt,draw,label = {$p_{m^2-m}$}] (pm2) at (mymark2) {};
\node[circle,fill=black,inner sep=2pt,draw,label = below:{$p_{m}$}] (p2) at (mymark3) {};

\node[circle,fill=black,inner sep=2pt,draw,label = {$p_{m^2-1}$}] (pm1z) at (anothermark1) {};
\node[circle,fill=black,inner sep=2pt,draw,label = {$p_{m^2-m-1}$}] (pm2z) at (anothermark2) {};
\node[circle,fill=black,inner sep=2pt,draw,label = below:{$p_{m-1}$}] (p2z) at (anothermark3) {};

\node[label = below:{$|e_{0}| = 1$}] (l1) at (edgemark4){};

\node[label = below:{$|e_{1}| = 2^{m + 2} $}] (l2) at (edgemark3){};

\node[label = {[rotate=20]above:$|e_{m-2}| = 2^{m^2 -m + 2} $}] (l2) at (edgemark2){};
\node[label = {[rotate=15]above:$|e_{m-1}| = 2^{m^2+2} $}] (l3) at (edgemark1){};

  \draw[loosely dotted, very thick] (mymark2) -- (mymark3);

\end{tikzpicture}
	\caption{The graph $G$ and the dataset $R$ defined in Example \ref{exa:tall_imbalanced_tree}}
	\label{fig:GraphConstructionOfExample}
\end{figure*}

\begin{exa}[tall imbalanced tree]
	\label{exa:tall_imbalanced_tree}
	
	For any integer $m > 10$, let $G$ be a metric graph pictured in Figure \ref{fig:GraphConstructionOfExample} that has two vertices $r,q$ and $m+1$ edges $(e_i)$ for $i \in \{0, ..., m\}$, and the length of each edge $e_i$ is $|e_i| = 2^{m \cdot i +2}$ for $i \geq 1$.
	Finally, set $|e_0| = 1$. For every $i \in \{1, ..., m^2\}$ if $i$ is divisible by $m$ we set $p_{i}$ be the middle point of $e_{i / m}$ and for every other $i$ we define $p_i$ to be the middle point of segment $(p_{i+1}, q)$. 
	Let $d$ be the induced shortest path metric on the continuous graph $G$.
	Then $d(q,r) = 1$, $d(r, p_{i}) = 2^{i+1} + 1$, $d(q,p_{i}) = 2^{i} $.
	If $i > j $ and $\ceil{\frac{i}{m}} = \ceil{\frac{j}{m}}$, then $$ d(p_{j}, p_{i})  = \sum\limits_{t = j+1}^{i} 2^{t}.$$
	We consider the reference set $R = \{r\} \cup \{p_{i} \mid i=1,2,3,...,m^2 \}$ with the metric $d$.
	
	\medskip
	\noindent
	Let us define an implicit cover tree $\I(R)$ by setting $r$ to be the root node and $l(p_{i}) = i$ for all $i$. 
	For all $i \in 1,...,m^2$:
	If $i$ is divisible by $m$, we set  $(r,i+1)$ to be the parent of $(p_{i},i)$. 
	If $i$ is not divisible by $m$, we set $(p_{i+1},i+1)$ to be the parent of $(p_{i},i)$. 
	For every $i$ divisible by $m$, the point $p_i$ is in the middle of edge $e_{i / m}$, hence $d(p_{i}, r) \leq 2^{i + 1}$.
	For every $i$ not divisible by $m$, by definition, $p_i$ is the middle point of $(p_{i+1},q)$.
	Therefore, we have $d(p_i, p_{i+1}) \leq 2^{i+1}$. 
	Since for any point $p_i$ distance to its parent is at most $2^{i+1}$, the tree $\I(R)$ satisfies covering condition (\ref{dfn:cover_tree_implicit}b).
	For any integer $t$, the cover set is $C_t = \{r\} \cup \{ p_i \mid i \geq t\}$. 
	We will prove that $C_t$ satisfies (\ref{dfn:cover_tree_implicit}c). 
	Let $p_{i} \in C_t$.
	If $i$ is divisible by $m$, then 
	$d(r, p_i) = 2^{i+1} \geq 2^{t + 1} > 2^t.$ 
	If $i$ is not divisible by $m$, then
	$d(r, p_{i}) = d(r,q) + d(q,p_{i})  = 1 + 2^{i+1}  > 2^{t}$. 
	\medskip
    \noindent	
	Then the root $r$ is separated from the other points by the distance $2^t$. 
	Consider arbitrary points $p_{i}$ and $p_{j}$ with indices $i > j \geq t $ and $\ceil{\frac{i}{m}} = \ceil{\frac{j}{m}}$.
	Then
	$$d(p_{i}, p_{j}) = \sum^{i}_{s = j+1} 2^{s}  \geq 2^{j+1} \geq 2^{t+1} > 2^{t}.$$
	On the other hand, if $i > j \geq t $ and $\ceil{\frac{i}{m}} \neq \ceil{\frac{j}{m}}$, then
	$$d(p_{i} , p_{j}) = d(p_{i},q) + d(p_{j} ,q)  \geq  2^{i} + 2^{j} \geq 2^{j+1} \geq 2^{t+1} > 2^t.$$
	For any $t$, we have shown that all pairwise combinations of points of $C_{t}$ satisfy condition~(\ref{dfn:cover_tree_implicit}c).
	Hence this condition holds for the whole tree $\I(R)$.
	\bs
\end{exa}


\noindent
Let us now define the explicit depth, that corresponds to maximal root-to-node path of any cover tree. 
By Definition \ref{dfn:cover_tree_explicit} an explicit cover tree $\T(R)$ is a quotient of $\I(R) / \backsim$, where we collapse all the chains having only a single self-child into a single node. 
Nodes of $\I(R) / \backsim$ are denoted by $(p,[i])$, where $[i]$ is the equivalence class of integer $i$ in $\backsim$. 
By  \cite[Lemma~4.3]{beygelzimer2006cover} the depth of any node $(p,[i])$ is "defined as the number of explicit grandparent nodes on the path from the root to $p$ in the lowest level in which $p$ is explicit". The explicit depth of a node $(p,[i])$ in any explicit cover tree $\T$ is introduced in Definition \ref{dfn:explicit_depth_for_compressed_cover_tree} below using the most natural interpretation of the aforementioned quotes. 

\begin{dfn}[Explicit depth for explicit cover tree]
	\label{dfn:explicit_depth_for_compressed_cover_tree}
	Let $R$ be a finite subset of a metric space with a metric $d$.
	Let $\T(R)$ be an explicit cover tree on $R$. 
	For any $(p,[j]) \in \T(R)$, let $s = (w_0, ... , w_m)$ be a node-to-root path of $(p,[j])$, where
	$w_0 = (r, [+\infty])$ and $w_m = (p,[j])$. We define $D(p, [j])$ to be the number of nodes $|s|$ in the path $s$. The \emph{explicit depth} of a cover tree is defined as the size of maximal node-to-root path $$ D(\T(R)) = \max_{(p,[j]) \in \T(R)}D(p,[j]).$$
	\bs
\end{dfn}

\noindent
Lemma \ref{lem:tall_imbalanced_tree_explicit_depth} shows that the cover tree of Example \ref{exa:tall_imbalanced_tree} the maximal node-to-root has $2 \cdot \sqrt{|R|}$ size, where $|R|$ is the size of dataset. 

\begin{lem}
	\label{lem:tall_imbalanced_tree_explicit_depth}
	Let $\T(R)$ be a compressed cover tree on the set $R$ from Example \ref{exa:tall_imbalanced_tree} for some $m \in \Z$. 
    The explicit depth $D(\T(R))$ of Definition~\ref{dfn:explicit_depth_for_compressed_cover_tree} has the upper bound $2m+1$. 
	\bs
\end{lem}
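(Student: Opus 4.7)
The plan is to enumerate the equivalence classes of $\backsim$ in $\I(R)/\backsim$ explicitly, and then bound the length of the longest root-to-node path in $\T(R)$. First, I would catalogue the non-self children in $\I(R)$: by the parent assignment specified in Example~\ref{exa:tall_imbalanced_tree}, the node $(r,t)$ has a non-self child precisely when $t \in \{m+1, 2m+1, \ldots, m^2+1\}$, giving exactly $m$ branching levels of $r$. Among the $p$-nodes, $(p_i,i)$ receives a non-self child (namely $(p_{i-1},i-1)$) iff $i \geq 2$ and $i-1$ is not divisible by $m$; all remaining parent links in $\I(R)$ are self-links $(p,t)\to(p,t-1)$, and the separation condition has already been verified in the example.

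Second, I would translate this catalogue into equivalence classes under Definition~\ref{dfn:cover_tree_explicit}. The $m$ branching levels of $r$ partition its implicit chain into exactly $m+1$ classes, which appear in $\T(R)$ as a linear $r$-spine $(r,[+\infty]) \to (r,[m^2]) \to (r,[(m-1)m]) \to \cdots \to (r,[m])$ of length $m+1$. Each non-bottom spine node has exactly one off-spine child of the form $(p_{km},[km])$ for some $k \in \{1,\ldots,m\}$. From such a $(p_{km},[km])$, a chain descends via the non-self links $(p_i,[i]) \to (p_{i-1},[i-1])$ through indices $i = km, km-1, \ldots, (k-1)m+1$, contributing at most $m$ further nodes. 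Every individual $p_i$ thereby contributes at most two equivalence classes to $\T(R)$.

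Third, I would bound $D(\T(R))$ by case analysis on the target node $(p,[j])$. If $p = r$, then $(p,[j])$ lies on the $r$-spine at depth at most $m+1$. If $p = p_i$, then the path from the root walks down the $r$-spine to the unique spine node whose off-spine child is $(p_{km},[km])$ for the $k$ with $(k-1)m < i \leq km$ (at most $m+1$ spine nodes), and then takes at most $m$ further steps through the $p$-chain. In either case the number of nodes on the root-to-node path is at most $(m+1) + m = 2m+1$, so $D(\T(R)) \leq 2m+1$.

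The main obstacle is the careful bookkeeping of $\backsim$: one must verify that no unexpected non-self children arise and, crucially, that the $m$ separate $p$-chains remain disjoint in $\T(R)$. The key divisibility observation is that $(p_{(k-1)m+1},(k-1)m+1)$ does not acquire $(p_{(k-1)m},(k-1)m)$ as a child, since $(p_{(k-1)m},(k-1)m)$'s parent is the $r$-node $(r,(k-1)m+1)$ rather than any $p$-node. Granting this disjointness, the class enumeration above is tight and the bound $2m+1$ follows directly.
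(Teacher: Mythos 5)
Your proof is correct and is essentially the same decomposition as the paper's: you bound the root-to-node path by the length of the $r$-spine (at most $m+1$ nodes, since $r$ has exactly $m$ non-self children) plus the length of one $p$-chain (at most $m$ nodes, bounded by the divisibility observation that each chain $p_{km},\dots,p_{(k-1)m+1}$ terminates when the next index becomes divisible by~$m$). The only difference is one of exposition: the paper walks upward from a given $(p_i,[i])$ through its ancestors, while you descend from the root and catalogue the equivalence classes of $\backsim$ explicitly, but the split into "$r$-spine" plus "$p$-chain" and the resulting count $(m+1)+m=2m+1$ are identical.
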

\begin{proof}
    Note first that root node $r$ contains exactly $m$ non-trivial children.
    Consider arbitrary node $p_i$.
    If $i$ is divisible by $m$, then $r$ is the parent of $p_i$. It follows that we can reach root node $(r,[+\infty])$ in at most $m+1$ steps from $p_i$. 
	
	\medskip
	
	\noindent
	Let us now consider an index $i$ that is not divisible by $m$. 
	Note that  $(p_{j+1}, [j+1])$ is the parent of $(p_j,[j])$ for all $j \in [i, m \cdot \ceil{i / m} - 1]$.  
	Then the path $s$ consisting of all ancestors of $p_i$ from $p_i$ to $p_{m \cdot \ceil{i / m}}$ has the form $((p_{i},[i]), (p_{i+1},[i+1]), ..., (p_{m \cdot \ceil{i / m}}, [m \cdot \ceil{i / m} ] ))$. 
	Note that $|s| \leq m$. Since $ m \cdot \ceil{i / m}$ is divisible by $m$, by the first paragraph the node to root path $l$ from $ (p_{m \cdot \ceil{i / m}}, [m \cdot \ceil{i / m} ] )$ to $(r,[+\infty])$ takes at most $m+1$ steps.
	Therefore $$D(p_i, [i]) \leq |s \cup l| \leq |s| + |l| \leq (m + 1) + m \leq 2m+1,$$
	which proves the claim.
\end{proof}


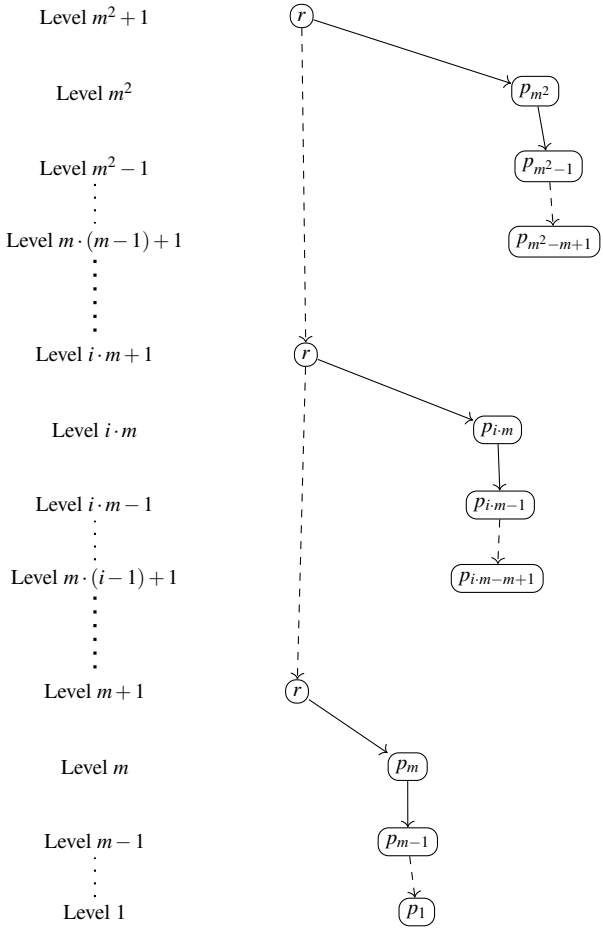
\begin{figure}
	\centering
	
	\begin{tikzpicture}[align=center, node distance = 1.0cm, scale = 0.25]

\scriptsize

	\node (scalem2text) {Level $m^2 + 1$};

	\node[below of =scalem2text] (scalem1text) {Level $m^2 $};
	\node[below of =scalem1text] (scalemh01text) {Level $m^2 - 1$};
	\node[below = 15 pt of scalemh01text] (scalem11text) {Level $m \cdot (m-1) + 1$};

		\node[below = 30 pt of  scalem11text] (scalem1text2) {Level $i \cdot m + 1 $};
		
			\node[below of = scalem1text2] (scalem1text22) {Level $i \cdot m$};
		
	\node[below of =scalem1text22] (scalemh01text2) {Level $i \cdot m - 1$};
	\node[below = 15 pt of scalemh01text2] (scalem11text22) {Level $m \cdot (i-1) + 1$};

		\node [blockzflexi,  right=50pt of scalem2text ] (noderoot) {$r$};

	\node [blockzflexi,  right=140pt of scalem1text] (node1) {$p_{m^2}$};
	\node [blockzflexi, right=135pt of scalemh01text] (node2) {$p_{m^2-1}$};

			\node [blockzflexi, right=120pt of scalem11text] (node4) {$p_{m^2 - m +1}$};

				\node [blockzflexi,  right=124pt of scalem1text22] (node1d) {$p_{i \cdot m}$};
				
					\node [blockzflexi,  right=50pt of scalem1text2] (node1dcopyr) {$r$};

	\node [blockzflexi, right=115pt of scalemh01text2] (node2d) {$p_{i \cdot m-1}$};
			\node [blockzflexi, right=100pt of scalem11text22] (node4d) {$p_{i \cdot m - m +1}$};

	\node[below = 30 pt of scalem11text22] (scalem1textla) {Level  $m+1$};

		\node[below  of =  scalem1textla] (scalem11text2) {Level  $m$};
		
			\node [blockzflexi,  right=50pt of scalem1textla] (noderootcopy3) {$r$};

	\node[below of =scalem11text2] (scalemm1text) {Level $m-1$};
	\node[below = 15 pt of scalemm1text] (level1) {Level $1$};

			\node [blockzflexi, right=95pt of scalem11text2] (node5) {$p_{m}$};
				
			\node [blockzflexi, right=85pt of scalemm1text] (node6) {$p_{m-1}$};
			
			\node [blockzflexi, right=100pt of level1] (node7) {$p_{1}$};


	  \draw[->] (noderoot) -> (node1);
	  \draw[->,dashed] (noderoot) -> (node1dcopyr);
	  \draw[->] (node1dcopyr) -> (node1d);
	  \draw[->] (node1) -> (node2);
	  \draw[->, dashed] (node2) -> (node4);
	  
	   \draw[->] (node1d) -> (node2d);
	  \draw[->, dashed] (node2d) -> (node4d);
	 
	  \draw[->, dashed] (node1dcopyr) -> (noderootcopy3);
	  
	  \draw[->] (noderootcopy3) -> (node5);
	  \draw[->] (node5) -> (node6);
	  \draw[->, dashed] (node6) -> (node7);
	 
	 \draw[loosely dotted, thick] (scalem11text22) -- (scalemh01text2);
	 \draw[loosely dotted, very thick] (scalem11text) -- (scalem1text2);
	  \draw[loosely dotted, thick] (scalemh01text) -- (scalem11text);
	  \draw[loosely dotted, very thick] (scalem11text22) -- (scalem1textla);
	  \draw[loosely dotted, thick] (level1) -- (scalemm1text);

\end{tikzpicture}
	
	\caption{An explicit cover tree built on the dataset $R$ in Example \ref{exa:tall_imbalanced_tree}.
	The node $r$ at the level $i \cdot m+1$ corresponds to the node $(r,[i \cdot m + 1])$ in the explicit cover tree $\T(R)$.
	}
	\label{fig:bad_cover_tree}
\end{figure}

\section{Cover tree construction}
\label{sec:tree_construction}

Counterexample~\ref{cexa:construction_algorithm_of_original_cover_tree} shows that the proof of worst-case time complexity of the Insert() operation for an implicit cover tree \cite[Theorem~6]{beygelzimer2006cover} is incorrect. A correct time complexity for a new compressed cover tree is given in \cite[Corollary~3.53]{2205.10194}. 

\begin{algorithm}
	\caption{Copy-pasted Insert() algorithm for inserting a point $p$ into an implicit cover tree $T$ \cite[Algorithm~2]{beygelzimer2006cover}. This algorithm is launched with $i = l_{\max}$ and $Q_{i} = \{r\}$, where $r$ is the root node of $T$. }
	\label{alg:cover_tree_construction_original}
	\begin{algorithmic}[1]
		\STATE \textbf{Insert}(point $p$, cover set $Q_i$, level $i$)
		\STATE Set $Q = \{\Child(q) \mid q \in Q_i\}$
		\IF {$d(p,Q) > 2^{i}$}
		\STATE \textbf{return} "no parent found"
		\ELSE
		\STATE Set $Q_{i-1} = \{q \in Q \mid d(p,q) \leq 2^{i}\}$
		\IF{\textbf{Insert}$(p,Q_{i-1}, i-1)$ = "no parent found" and $d(p,Q_{i}) \leq 2^{i}$} 
		\STATE Pick $q \in Q_i$ satisfying $d(p,q) \leq 2^{i}$ and insert $p$ into $\Child(q)$, \textbf{return} "parent found"
		\ELSE
		\STATE \textbf{return} "no parent found"
		\ENDIF
		
		\ENDIF 
		
	\end{algorithmic}
\end{algorithm}

\begin{cexa}[for a step in the proof of {\cite[Theorem~6]{beygelzimer2006cover}}]
	\label{cexa:construction_algorithm_of_original_cover_tree}
The idea is based on adding a new point $q$ of Figure \ref{fig:GraphConstructionOfExample} to the tree $\T(R)$ of Example \ref{exa:tall_imbalanced_tree} that lures the Algorithm \ref{alg:cover_tree_construction_original} into using all branches of $\T(R)$. It follows that the Algorithm \ref{alg:cover_tree_construction_original} is launched $O(|R|)$ times. However, in the proof of \cite[Theorem~6]{beygelzimer2006cover} it was claimed that Algorithm \ref{alg:cover_tree_construction_original} is
	launched at most $4 \cdot D(\T(R))$, where $D(\T(R))$ is the explicit depth of explicit cover tree $\T(R)$.
	This is a contradiction, since $D(\T(R)) \leq 2\sqrt{|R|} + 1$ but Algorithm~\ref{alg:cover_tree_construction_original} runs $O(|R|)$ times.
	\medskip
	
    \noindent
	For more details, we cite a part of the proof of \cite[Theorem~6]{beygelzimer2006cover}:
	
	\medskip
	
	"\textbf{Theorem 6} Any insertion or removal takes time at most $O(c^6\log(n))$"
	[In other words the run time of Algorithm \ref{alg:cover_tree_construction_original} is $O(c^6\log(n))$, where $n$ is the number points of original dataset $S$ on which tree $T$ was constructed.]
	
	[\emph{Partial proof:} ]: \emph{" Let $k = c^2 \log(|S|)$ be the maximum explicit depth
	of any point, given by Lemma 4.3. Then the total
	number of cover sets with explicit nodes is at most
	$3k+k = 4 k$, where the first term follows from the fact
	that any node that is not removed must be explicit at
	least once every three iterations, and the additional
	$k$ accounts for a single point that may be implicit for
	many iterations.
	Thus the total amount of work in Steps 1 [Our line 2] and 2 [Our lines 3-5] is
	proportional to $O(k \cdot \max_i|Q_i|)$. Step 3 [Our lines 5-11] requires work
	no greater than step 1 [Our line 2]."}
	
	\medskip
	\noindent
	\textbf{In our interpretation the above arguments says that the total number of times line $1$ [our line 2] was called during the algorithm has the upper bound $4 \cdot D(\T(R))$ , where $D(\T(R))$ is the explicit depth of $\T(R)$, see Definition \ref{dfn:explicit_depth_for_compressed_cover_tree}. In this Counterexample we will show that $\T(R)$ from Example \ref{exa:tall_imbalanced_tree} does not satisfy the claimed inequality.}

\medskip	
\noindent	
	Take the reference set $R$, the compressed cover tree $\T(R)$ and the point $q$ from Example \ref{exa:tall_imbalanced_tree} for any parameter $m > 200$. 
	Assume that we have already constructed tree $\T(R)$. 
	Let us show that $\T(R \cup q)$ constructed by Algorithm~\ref{alg:cover_tree_construction_original} from the input $q, i = m^2+1, Q_i = \{(r,[+\infty])\}$ runs at least $m^2-2$ self-recursions. 
	This will lead to a contradiction since by Lemma \ref{lem:tall_imbalanced_tree_explicit_depth} we have $D(\T(R)) \leq 2m + 1$.
	
	\medskip
	\noindent
	We show by induction on $m$ going down that, for every step $i \in [1, m^2]$, we have $Q_i = \{(r,[i]),(p_i,[i])\}$. 
	The proof for the base case $i = m^2$ is similar to the induction step and thus will be omitted. 
	Assume that $Q_{i}$ has the desired form for some $i$.
	Let us show that the claim holds for $i-1$. 
	For all levels $i-1$ divisible by $m$, the node $(p_{i-1},[i-1]$ is a child of node $(r,[i])$. 
	For all levels $i-1$ not divisible by $m$, the node $(p_{i-1}, [i-1])$ is a child of $p_i$. 
	Since $\T(R)$ contains exactly one node at each level, in both cases we have $Q = \{(r,[i]), (p_{i},[i]), 
	(p_{i-1}, [i-1])\}$. 
	Since $d(q,r) = 1$, $d(q,p_{i}) = 2^{i+1}$ and $d(q,p_{i-1}) = 2^{i}$  we have 
	$$Q_{i-1} = \{p \in Q_i \mid d(p,q) \leq 2^i\} = \{(r , [i-1]),(p_{i-1}, [i-1])\}.$$
	\medskip
	\noindent
	The actual implementation of algorithm \ref{alg:cover_tree_construction_original} iterates over all levels $i$ for which there exists a node in $Q_i$  that contains at least one non-trivial child on level $i-1$ and for which the condition in line $7$ is satisfied. Since for every index $i \in [2,m^2+1]$ we have $Q_i = \{(r,[i]),(p_i,[i])\}$ and since either $r$ or $p_{i}$ has a child at level $i-1$ and the condition in line $7$ is always satisfied, it follows that $m^2-2$ is a low bound for the number $\xi$ of self-recursions. 
	Therefore the contradiction follows from the inequality:
	$$m^2-2 \leq  \xi \leq 4 \cdot D(\T(R)) \leq 8 \cdot (2m + 1) \leq 16 \cdot m + 8$$
	where $m > 20$. 
	\bs

\end{cexa}

\section{Nearest neighbor search}
\label{sec:nn_glich}

Counterexample \ref{cexa:original_all_nearest_neighbors_algorithm} shows that the proof of 
\cite[Theorem~5]{beygelzimer2006cover}, which gives an upper bound for the complexity of Algorithm \ref{alg:cover_tree_k-nearest_original} is incorrect. A correct time complexity estimate for a new $k$-nn algorithm using compressed cover tree is given in \cite[Corollary~3.84]{2205.10194}. 


\begin{algorithm}
	\caption{Copy-pasted \cite[Algorithm~1]{beygelzimer2006cover} based on an implicit cover tree $T$ \cite[Section~2]{beygelzimer2006cover} for nearest neighbor search, which is used in Counterexample \ref{cexa:original_all_nearest_neighbors_algorithm}. The children of a node $q$ of an implicit cover tree are defined as the nodes at one level below $q$ that have $q$ as their parent. In the actual implementation the loop in lines 3-6 runs only for the levels containing nodes with non-trivial children (not coinciding with their parents). 
	}
	\label{alg:cover_tree_k-nearest_original}
	\begin{algorithmic}[1]
		\STATE \textbf{Input} : implicit cover tree $T$, a query point $p$
		\STATE Set $Q_{\infty} = C_{\infty}$ where $C_{\infty}$ is the root level of $T$
		\FOR{$i$ from $\infty$ down to $-\infty$}
		\STATE Set $Q = \{\text{Children}(q) \mid q \in Q_i\}.$
		\STATE Form cover set $Q_{i-1} = \{q \in Q \mid d(p,q) \leq d(p,Q) + 2^{i}\}$
		\ENDFOR
		\STATE \textbf{return} $\text{argmin}_{q \in Q_{-\infty}}d(p,q)$
	\end{algorithmic}
\end{algorithm}

\begin{cexa}[for a step in the proof of {\cite[Theorem~5]{beygelzimer2006cover}}]
	\label{cexa:original_all_nearest_neighbors_algorithm}
	Counterexample \ref{cexa:original_all_nearest_neighbors_algorithm} shows that there is a gap in proof of \cite[Theorem~6]{beygelzimer2006cover}. The counterexample is obtained by running Algorithm \ref{alg:cover_tree_k-nearest_original} for node $q$ of Figure \ref{fig:GraphConstructionOfExample} and tree $\T(R)$ of Example \ref{exa:tall_imbalanced_tree}. 
    It it shown that Algorithm \ref{alg:cover_tree_k-nearest_original} iterates over all branches of $\T(R)$, therefore  lines 3-6 are considered exactly $|R|$ times. However, the proof of \cite[Theorem~6]{beygelzimer2006cover} claimed that the number of times lines 3-6 are considered is bounded by multiplication $\max_i|R_i| \cdot D(\T(R))$, where $D(\T(R))$ is the maximal path-to-root path that has an upper bound $2\sqrt{|R|}$. In this counterexample it will be also shown that $\max_i|R_i| \leq 3$ during the whole iteration of the algorithm, which will lead to contradiction $|R| \leq 3 \cdot 2\sqrt{|R|}$, when $|R|$ is sufficiently big. 
    
    \medskip
    \noindent 
    
   	For more detailed exhibition let us first cite a part of the proof of  \cite[Theorem~5]{beygelzimer2006cover}.
	
	\medskip
	\noindent
	"\textbf{Theorem 5}
	If the dataset $S \cup \{p\}$ has expansion constant $c$, the nearest neighbor of $p$ can be found in time $O(c^{12}\log(n))$."
	
	\smallskip

	\emph{[\emph{Partial proof:}] "Let $Q^{*}$ be the last $Q$ considered by the Algorithm \ref{alg:cover_tree_k-nearest_original} (so $Q^{*}$ consists only of lead nodes with scale $-\infty$). Lemma 4.3 bounds the explicit depth of any node in the tree (and in particular any node in $Q^{*}$) by $k = O(c^2 \log (N))$. Consequently the number of iterations is at most $k|Q^{*}| \leq k \max_i|Q_i|$."}
	
	\medskip
	\noindent
	\textbf{By our interpretation the above argument claims that the total number $\xi$ of times when Algorithm \ref{alg:cover_tree_k-nearest_original} runs lines 3-6 has an upper bound $\xi \leq D(\T(R)) \cdot \max_i|Q_i|.$ Contradiction will be obtained by showing that $\T(R)$ from Example \ref{exa:tall_imbalanced_tree} does not satisfy this inequality.} 

	\medskip
	\noindent	
	Take $R,\T(R)$ and $q$ from Example \ref{exa:tall_imbalanced_tree}. We will apply Algorithm \ref{alg:cover_tree_k-nearest_original} to the tree $\T(R)$ and query point $q$. 
	By Lemma \ref{lem:tall_imbalanced_tree_explicit_depth} the cover tree $\T(R)$ having parameter $m$ has $D(p) \leq 2m+1$ for all $p \in R$. A contradiction to the original argument will follow after showing that $\max|Q_i| \leq 2$ and $\xi \geq m^2 - 2$.
	
	\medskip
	\noindent
	Let us first estimate $\max_i |Q_i|$.  
	Similarly to Counterexample \ref{cexa:construction_algorithm_of_original_cover_tree} we will show that, for every iteration $i \in [1, m^2]$ of lines 3-5 of Algorithm \ref{alg:cover_tree_k-nearest_original}, we have 
	$Q_i = \{(r,[i]),(p_i,[i])\}$. 
	The proof for the basecase $i = m^2$ is similar to the induction step and thus will be omitted. 
	Assume that $Q_{i}$ has the desired form for some $i$.
	Let us show that the claim holds for $i-1$. 
	For all levels $i-1$ divisible by $m$, the node $(p_{i-1}, [i-1]$ is a child of the root $(r,[i])$. 
	For all levels $i-1$ not divisible by $m$, the node $(p_{i-1}, [i-1])$ is a child of $(p_{i}, [i])$. 
	Since $\T(R)$ contains exactly one node at each level, in both cases
	we have $Q = \{(r,[i-1]), (p_{i},[i-1]). (p_{i-1},[i-1])\}$. 
	Since $d(q,r) = 1$, $d(q,p_{i}) = 2^{i+1}$ and $d(q,p_{i-1}) = 2^{i}$, we have 
	$$Q_{i-1} = \{p \in Q_t \mid d(p,q) \leq 2^i + 1\} = \{(r,[i-1]),(p_{i-1}, [i-1])\}$$
	Therefore it follows that $|Q_i| \leq 2$ for all $i \in [1, m^2]$.
	
	\medskip
	\noindent
	The actual implementation of algorithm \ref{alg:cover_tree_k-nearest_original}iterates over all levels $i$ for which there exists a node in $Q_i$ containing at least one non-trivial child at level $i-1$. 
	Since $Q_i = \{r,p_i\}$ and for every index $i \in [2,m^2+1]$, either $(r,[i])$ or $(p_{i},[i])$ has a child on level $i-1$, it follows that  $m^2-2$ is a low bound for the number $\xi$  of iterations. 
	A contradiction follows from  
	$$m^2 - 2 \leq \xi \leq  D(\T(R)) \cdot  \max_i|Q_i| \leq (2m+1) \cdot 2 \leq 4m+2,$$
	for any $m > 20$. 
	
	\bs
\end{cexa}

\section{Challenges of the nearest neighbor search based on paired trees}
\label{sec:challenges_paired_tree}

In 2009 \cite[Theorem~3.1]{ram2009linear} revisited the time complexity for all 1st nearest neighbors and claimed the upper bound $O(c(R)^{12}c(Q)^{4\kappa}\max\{|Q|,|R|\})$, where $c(Q),c(R)$ are expansion constants of the query set $Q$ and reference set $R$.
The degree of bichromaticity $\kappa$ is a parameter of both sets $Q,R$, see \cite[Definition~3.1]{ram2009linear}.
We have found the following issues.

\medskip
\noindent
First, Counterexample~\ref{cexa:dualtreecode} shows that \cite[Algorithm~1]{ram2009linear} for $Q=R$ returns for any query point $q\in Q$ the same point $q$ as its first neighbor. 
Second, Remark~\ref{rem:kappa} explains several possible interpretations of \cite[Definition~3.1]{ram2009linear} for the parameter $\kappa$. 
Third, \cite[Theorem~3.1]{ram2009linear} similarly to \cite[Theorem~5]{beygelzimer2006cover} relied on the same estimate of recursions in the proof of \cite[Lemma~4.3]{beygelzimer2006cover}.
Counterexample~\ref{cexa:dualtreeproof} explains step-by-step why the proof of the time complexity result of \cite[Algorithm~1]{ram2009linear} is incorrect and requires a clearer definition of $\kappa$.

\medskip
\noindent
In 2015 Curtin with the authors above  \cite{curtin2015plug} introduced other parameters:
the imbalance $I_t$ in \cite[Definition~3]{curtin2015plug} and $\theta$ in \cite[Definition~4]{curtin2015plug}.
These parameters measured extra recursions that occurred due to possible imbalances in trees built on $Q,R$, which was missed in the past. 
\cite[Theorem~2]{curtin2015plug} shows that, for constructed cover trees on a query set $Q$ and a reference set $R$, Problem~\ref{pro:knn} for $k=1$ (only 1st nearest neighbors) can be solved in time 
$$O\Big(c^{O(1)}\big(|R| + |Q| + I_t + \theta\big)\Big).\eqno{(*)},$$
where $c$ is expansion constant that depends on $Q$ and $R$.
The problem with this approach is that in worst case $I_t$ is quadratic $O(|R|^2)$. To make the time complexity linear, we would have to show
$I_t = O(c^{O(1)} \cdot \max\{|R| , |Q|\})$. However, no such result exist at the moment. 

\begin{algorithm}
	\caption{
		Copy-pasted \cite[Algorithm~1]{ram2009linear} is analyzed in Counterexamples~\ref{cexa:dualtreecode} and~\ref{cexa:dualtreeproof}.
	}
	\label{alg:cover_tree_k-nearest_dt_original}
	\begin{algorithmic}[1]
		\STATE \textbf{Function} FindAllNN(a node $q_j\in T(Q)$, a subset $R_i$ of a cover set $C_i$ of $T(R)$).
		\IF {$i = -\infty$}
		\STATE for each $q_j \in L(q_j)$ \textbf{return} $\text{argmin}_{r \in R_{-\infty}} d(q,r)$
		\STATE \COMMENT{here $L(q_j)$ is the set of all descendants of the node $q_j$}
		\ELSIF{$j < i$}
		\STATE $\C(R_i) = \{\text{Children}(r) \mid r \in R_i\} $ 
		\STATE $R_{i-1} = \{r \in R \mid d(q_j,r) \leq d(q_j, R) + 2^{i} + 2^{j+2} \}$
		\STATE FindAllNN($q_{j-1}, R_i$) \COMMENT{ $q_{j-1}$ is the same point as $q_j$ on one level below}
		\ELSE{}
		\STATE for each $p_{j-1} \in \text{Children}(q_j)$ FindAllNN($p_{j-1},R_{i}$)
		\ENDIF
	\end{algorithmic}
\end{algorithm}

\medskip
\noindent
The step-by-step execution of Algorithm~\ref{alg:cover_tree_k-nearest_dt_original} will show that the number of reference expansions has a lower bound $O(\max\{|Q|,|R|\}^2)$.  Recall that \cite[End of Section~1]{ram2009linear} defined the all-nearest-neighbor problem as follows. 
"\textbf{All Nearest-neighbors:} For all queries $q \in Q$ find $r^{*}(q) \in R$ such that $r^{*}(q) = \mathrm{argmin}_{r \in R}d(q,r)"$. 
For $Q=R$, the last formula produces trivial self-neighbors.

\medskip
\noindent
In original Algorithm~\ref{alg:cover_tree_k-nearest_dt_original}, the node $q_j$ has a level $j$, a reference subset $R_i\subset R$ is a subset of $C_i$ for an explicit cover tree $\T(R)$. 
The algorithm is called for a pair $q_j, R_{i} = \{(r, [+\infty])\}$, where $q_j$ is the root of the query tree at the maximal level $j = +\infty$,
and $r$ is the root of the reference tree at the maximal level $i = +\infty$. 

\medskip
\noindent
Split Algorithm~\ref{alg:cover_tree_k-nearest_dt_original} into these blocks:
\textbf{lines 2-4} : FinalCandidates,
\textbf{lines 5-9} : reference expansion,
\textbf{lines 9-11} : query expansion.

\begin{cexa}
	\label{cexa:dualtreecode}
	In the notations of  Example \ref{exa:tall_imbalanced_tree}, $m$ is a parameter of $R$.
	Build a compressed cover tree $\T(R)$ as in Figure \ref{fig:bad_cover_tree}. 
	Set $Q = R$. 
	First we show that Algorithm \ref{alg:cover_tree_k-nearest_dt_original} returns the trivial neighbor when $\T(Q)=\T(R)$. 
	\medskip
	\noindent
	We start the simulation with the query node $r$ on the level $m^2+1$, which has the reference subset $R_{m^2+1} = \{(r, [m^2+1])\}$. 
	The query node and the reference set are at the same levels, so we run the query expansions (lines 9-11). 
	The node $r$ has $p_{m^2}$ and $r$ as its children.
	Hence the algorithm goes into the branches that have $p_{m^2}$ as the query node and into the branches that have $r$ as the query node.
	\medskip
	\noindent
	Let us focus on all recursions having $p_{m^2}$ as the query node. In the first recursion involving the node $p_{m^2}$, we have $i = m^2+1, j = m^2$. Thus $j < i$ and we run reference expansions (lines 5-9). 
	The node $(r, [m^2+1])$ has two children at the level $m^2$, so $\C(R_i) = \{(p_{m^2}, [m^2]), (r, [m^2])\}$ . Since
	$d(p_{m^2},p_{m^2}) = 0$ and $d(p_{m^2}, r) = 2^{m^2+1}$ on line 7, we have:
	\begin{align*}
	    R_{m^2}  &= \{r \in \C(R_i) \mid d(q_j,r) \leq 2^{m^2 + 1} + 2^{m^2+2} \} \\
	    &= \{(p_{m^2}, [m^2]), (r, [m^2])\}
	\end{align*}
	Similarly, for $i = m^2, j = m^2-1, q_j = p_{m^2}$, we have 
	$$\C(R_i) =  \{ (p_{m^2}, [m^2]), (p_{m^2-1}, [m^2-1]), (r, [m^2+1])\}$$
	and since $d(p_{m^2 - 1}, p_{m^2}) = 2^{m^2}$ and $d(r,p_{m^2}) = 2^{m^2 + 1}$ we have:
		\begin{align*}
	    R_{m^2-1} &=  \{r \in \C(R_i) \mid d(q_j,r) \leq 2^{m^2} + 2^{m^{2} + 1} \} \\
	    &=  \{(p_{m^2}, [m^2-1]), (p_{m^2-1}, [m^2-1])\}.
	\end{align*}
	For $i = m^2-1, j = m^2-2, q_j = p_{m^2}$, we have 
	$$\C(R_i) =  \{(p_{m^2}, [m^2-2]), (p_{m^2-1}, [m^2-2]), (p_{m^2-2}, [m^2-2])\}.$$
	Since $d(p_{m^2 - 1}, p_{m^2}) = 2^{m^2}$ and $d(p_{m^2-2},p_{m^2}) = 2^{m^2} + 2^{m^2-1}$, we have:
		\begin{align*}
	    R_{m^2-2} &=  \{r \in \C(R_i) \mid d(q_j,r) \leq 2^{m^2 - 1} + 2^{m^{2}} \}   \\
	    &= \{(p_{m^2},[m^2]), (p_{m^2-1}, [m^2-1]), (p_{m^2-2}, [m^2-2])\}.
	\end{align*}
	Finally, for $i = m^2-2, j = m^2-3, q_j = p_{m^2}$, we have $$\C(R_i) =  \{(p_{m^2},[m^2-3]), ... , (p_{m^3-3},[m^2-3])\}$$ and 
	$d(p_{m^2}, p_{m^3-3}) = 2^{m^2} + 2^{m^2-1} + 2^{m^2-2}$. The previous inequalities imply that
	$$R_{m^2-3} =  \{r \in \C(R_i) \mid d(q_j,r) \leq 2^{m^2-2} + 2^{m^{2}-1} \} = \{(p_{m^2}, [m^2-3])\}.$$
	Since  $R_{m^2 -3} = \{p_{m^2}\}$, the nearest neighbor of $p_{m^2}$ will be chosen to be $p_{m^2}$. 
	The same argument can be repeated for all $p_t \in R$. 
	It follows that Algorithm \ref{alg:cover_tree_k-nearest_dt_original} finds trivial nearest neighbor for every point $p_t \in R$. 
	\bs
\end{cexa}







\begin{exa}
	\label{exa:dualtreeprooffaultyconst}
	To avoid the issue of finding trivial nearest neighbors as in Counterexample~\ref{cexa:dualtreecode}, we will modify Example~\ref{exa:tall_imbalanced_tree}. 
	For any integer $m > 100$, let $G$ be a metric graph that has $2$ vertices $r$ and $q$ and $2m-1$ edges $\{e_{0}\} \cup \{e_{1}, ..., e_{m-1},h_{1},..., h_{m-1}\}$.
	The edge-lengths are $|e_i| = 2^{i \cdot m+2}$ and $|h_i| = 2^{i \cdot m+2}$  for all $i \in [1,m]$, finally $|e_{0}| = 1$.
	
	\medskip
	\noindent
	For every $i \in \{1, ..., m^2\}$, if $i$ is divisible by $m$, we set $q_{i}$ to be the middle point of $e_{i / m}$ and $r_{i}$ to be the middle point of $h_{i / m}$. 
	For every other $i$ not divisible by $m$, we define $q_i$ to be the middle point of segment $(q_{i+1}, q)$ and $r_i$ to be the middle point of segment $(r_{i+1}, q)$.
	
	\medskip
	\noindent
	Let $d$ be the shortest path metric on the graph $G$. 
	Then $d(q_i,r) = d(q_i, q) + 1 = 2^{i+1} + 1$ , $d(q_i, r_j) = 2^{i+1} + 2^{j+1}$ and $d(q,r) = 1$. 
	Let $R = \{r, r_{m^2}, r_{m^2-1}, ..., r_1\}$ and let $Q = \{r,  q_{m^2}, q_{m^2-1}, ..., q_1 \}$. 
	Let compressed cover trees $\T(Q),\T(R)$ have the same structure as the compressed cover tree $\T(R)$ in Example~\ref{exa:tall_imbalanced_tree}.
	\bs
\end{exa}

\begin{rem}
	\label{rem:kappa}
	\cite[Definition~3.1]{ram2009linear} introduced the degree of bichromaticity $\kappa$ as follows. 
	\medskip
	\noindent
	"\textbf{Definition 3.1} Let $S$ and $T$ be cover trees built on query set $Q$ and reference set $R$ respectively.
	Consider a dual-tree algorithm with the property that the scales of $S$ and $T$ are kept as close as
	possible – i.e. the tree with the larger scale is always descended. Then, the degree of bichromaticity
	$\kappa$ of the query-reference pair $(Q, R)$ is the maximum number of descends in $S$ between any two
	descends in $T$".
	\medskip
	\noindent
	There are at least two different interpretations of this definition. Our best interpretation is that $\kappa$ is the maximal number of levels in $T$ containing at least one node between any two consecutive levels of $S$. However, if $q$ is a leaf node of $S$, but there are still many levels between level of $q$ and $l_{\min}(T)$, it is not clear from the definition if $\kappa$ includes these levels.
	\medskip
	\noindent
	\cite[page~3284]{curtin2015plug} pointed out that "
	Our results are similar to that of Ram et al. (2009a), but those results depend on a
	quantity called the constant of bichromaticity, denoted $\kappa$, which has unclear relation to
	cover tree imbalance. The dependence on $\kappa$ is given as $c_q^{4\kappa}$ , which is not a good bound,
	especially because $\kappa$ may be much greater than 1 in the bichromatic case (where $S_q = S_r$)".
\end{rem}

\noindent
To keep track of the indices $i,j$ the function call FindAllNN($q_j$, $R_i$) will be expressed as FindAllNN($i,j,q_j, R_i$) in Counterexample \ref{cexa:dualtreecode}.

\begin{cexa}[for a step in the proof of {\cite[Theorem~3.1]{ram2009linear}} ]
	\label{cexa:dualtreeproof}
	We will now show that in addition to the problems in the pseudocode the proof of \cite[Theorem~3.1]{ram2009linear} is incorrect.  Let us consider the following quote from its proof. 
	
	\medskip
	\noindent
	\emph{"\textbf{Theorem 3.1} Given a reference set $R$ of size $N$ and expansion constant $c_R$, a query set $Q$ of size
	$O(N)$ and expansion constant $c_Q$, and bounded degree of bichromaticity $\kappa$ of the $(Q,R)$ pair, the
	FindAllNN subroutine of Algorithm 1 computes the nearest neighbor in $R$ of each point in $Q$ in
	$O(c^{12}_Rc^{4\kappa}_QN)$ time.}
	
	\medskip
	\noindent
	\emph{[\emph{Partial proof:}]
	Since at any level of recursion, the size of $R$ [Corresponding to $\C(R_i)$ in Algorithm \ref{alg:cover_tree_k-nearest_dt_original} ] is bounded by $c_R^4\max_i{R_i}$ (width bound), and the
	maximum depth of any point in the explicit tree is $O(c^2_R \log(N))$ (depth bound), the number of nodes
	encountered in Line 6 is $O(c_R^{6} \max_i |R_i|\log(N))$. Since the traversal down the query tree causes
	duplication, and the duplication of any reference node is upper bounded by $c_Q^{4\kappa}$ , Line 6 [corresponds to line 8 in Algorithm \ref{alg:cover_tree_k-nearest_dt_original}] takes at most
	$c^{4\kappa}_Qc^6_R\max_i|R_i|\log(N)$ in the whole algorithm. }
	"
	
	\medskip
	\noindent
	\textbf{The above arguments claimed the algorithm runs Line 8 at most this number of times:}
	\begin{equation}
		\label{eqa:AuthorClaimingDualTreeKNN}
		\#(\text{Line 8}) \leq D(\T(R)) \cdot \max_i \C(R_i)  \cdot (\text{number of duplications}).
	\end{equation}
	\textbf{It will be shown that cover tree $\T(R)$ from Example \ref{exa:dualtreeprooffaultyconst} does not satisfy the inequality above. }

\medskip	
\noindent	
	Let $X, \T(R), \T(Q), R, Q$ be as in Example \ref{exa:dualtreeprooffaultyconst} for some parameter $m$. We will consider the simulation of Algorithm \ref{alg:cover_tree_k-nearest_dt_original} on pair $(\T(Q),\T(R))$. We note first Lemma \ref{lem:tall_imbalanced_tree_explicit_depth} applied on $\T(R)$ provides $\max_{p \in R}D(p) \leq 2m+1$
As in Counterexample \ref{cexa:original_all_nearest_neighbors_algorithm}, a contradiction will be achieved by showing that $R_i$ and a set of its children $\C(R_i)$ have a constant size bound on any recursion $(i,j)$ of Algorithm \ref{alg:cover_tree_k-nearest_dt_original}.
	
	\medskip
	\noindent
	Since $\T(R)$ contains at most one children on every level $i$ we have $|\C(R_i)| \leq |R_i| + 1$ for any recursion of FindAllNN algorithm. For any $i > m^2$ denote $r_i$ and $q_i$ to be $r$.
	Note first that since $l(q_t) = t$ for any $t \in [1,m^2]$, then $q_t$ is recursed into from FindAllNN($t+1,t+1,p,R_i$), where $p$ is parent node of $q_t$. Therefore it follows that $t \geq i + 1 $ in any stage of the recursion. 
	Let us prove that for any $i \in [1,m^2+1]$ following two claims hold: (1) Function FindAllNN($i$ , $j = i-1$, $q_t$, $R_i$) is called for all $t \geq i + 1$ and (2) We have  $R_i = \{r_{i+1}, r_{i}, r\}$ in this stage of the algorithm. The claim will be proved by induction on $i$. Let us first prove case $i = 2m+1$.  Note that  Algorithm \ref{alg:cover_tree_k-nearest_dt_original} is originally launched from FindAllNN($2m+1,2m+1,r, \{r\}$), 
	therefore the first claim holds. Second claim holds trivially since $r_{2m+2} = r$ and $r_{2m+1} = r$.
\medskip

\noindent	
	Let the claim hold for some $i$, let us show that the claim will always hold for $i-1$. Assume that FindAllNN($i , j = i-1, q_{t}, R_i)$ was called for some $t \geq i+1$. Since $j = i-1$, we perform a reference expansion (lines 5-9). By line $6$ and induction assumption we have $$\C(R_i) = \{(r,[i-1]), (r_{i+1}, [i-1]), (r_{i},[i-1]), (r_{i-1}, [i-1])\}.$$ Assume first that $q_t = r$.
	Recall that for any $u \in [1,m^2]$ we have $d(r,r_{u}) \geq 2^{u+1} $. It follows that
	\begin{align*}
	R_{i-1} &= \{ r' \in \C(R_i) \mid d(r, r') \leq  2^{i} + 2^{i+1}\} \\
	&= \{(r,[i-1]),(r_{i},[i-1]),(r_{i-1},[i-1])\}
	\end{align*}
	Let us now consider case $q_t \neq r$. We have $d(r,q_t) = 2^{t+1}$ and $d(q_t, r_{u+1}) = 2^{t+1} + 2^{u+2}$ for any $u \in [1,m^2+1]$. Therefore
	$$R_{i-1} = \{ r' \in C_{i-1} \mid d(q_t, r') \leq d(q_t,r) + 2^i + 2^{i+1} \leq 2^{t+1} + 2^{i} + 2^{i+1}\}.$$
	It follows that $R_{i-1} = \{(r,[i-1]), (r_{i},[i-1]), (r_{i-1},[i-1])\}$. In both cases we proceed to line $8$ where we launch FindAllNN$(i-1 , i-1, q_{t}, R_{i-1})$. After proceeding into the recursion we have $j = i$ and therefore query-expansion (lines 9-11) will be performed.  Note that $q_{t}$ was chosen so that $t \geq i+1$. Since every $q_{t-1}$ is either a child of $r$ or $q_{t}$ it follows that FindAllNN$(i-1 , i-2, q_{t'}, R_{i-1})$ will be called for all $t' \geq t-1 \geq i$. Then condition (2) of the induction claim holds as well. 

	\medskip
	\noindent
	It remains to show that Algorithm \ref{alg:cover_tree_k-nearest_dt_original} $(q, R_i = \{r\})$ has $O(m^4)$ low bound on the number of times reference expansions (lines 5-9) are performed.
	Let $\xi$ be the number of times Algorithm \ref{alg:cover_tree_k-nearest_dt_original} performs 
	reference expansions.  For every $q' \in Q$ denote $\xi(q')$ to be the total number of reference expansions performed for $q'$. Recall that any query node $q'$ is introduced in the query expansion (lines 9-11) for parameters $(i = u+1, j = u+1, p, R_i)$, where $p$ is the parent node of $q'$. Since $R_i$ is non empty for all levels $[1,u]$ we have $\xi(q_u) \geq u - 1$ for all $u$. Thus 
	$$\xi = \sum_{q' \in Q} \xi(q') \geq \sum^{m^2+1}_{u = 2} u-2 = O(m^4).$$
	There are different interpretations for the number of duplications. Note that the query tree $\T(Q)$ has exactly one new child on every level and that trees $\T(Q)$ and $\T(R)$ contain exactly the same levels. By using the definitions the number of duplications should be $2$. However, since there can be other interpretations for the number of duplications, we make a rough estimate that the number of duplications is upper bounded by the number of nodes in query tree $O(m^2)$. By using Inequality (\ref{eqa:AuthorClaimingDualTreeKNN}), we obtain the following contradiction: \begin{align*}
	  O(m^4) &= \xi \leq  \max_{p \in R}D(p) \cdot (\max_i \C(R_i)) \cdot (\text{number of duplications}) \\
	  &\leq (2m+1) \cdot 4 \cdot m^2 \leq O(m^3)
	\end{align*}
\end{cexa}

\section{Conclusions and further work}

The motivations for this paper were the past gaps in the proofs of time complexities in \cite[Theorem~5]{beygelzimer2006cover}, \cite[Theorem~6]{beygelzimer2006cover}, \cite[Theorem~3.1]{ram2009linear}, \cite[Theorem~5.1]{march2010fast}. 
In this paper, Example \ref{exa:tall_imbalanced_tree} introduced a dataset $R$ and its cover tree $\T(R)$, where each node appears in a separate level, so the tree 
is split into $\sqrt{|R|}$ different branches and its maximal depth $D(\T(R))$ is $2\sqrt{|R|}$.
\medskip

\noindent 
Counterexample \ref{cexa:construction_algorithm_of_original_cover_tree} shows that the proof of the time complexity \cite[Theorem~6]{beygelzimer2006cover} for the Insert() operation \cite[Algorithm~2]{beygelzimer2006cover} is incorrect for the explicit cover tree $\T(R)$ in Example \ref{exa:tall_imbalanced_tree}. 
Similarly, \cite[Theorem~5]{beygelzimer2006cover} giving time complexity bound for
the nearest neighbors search algorithm \cite[Algorithm~1]{beygelzimer2006cover} has a similar gap in the proof when used on $\T(R)$.
Counterexample~\ref{cexa:dualtreeproof} shows that the same mistake was later repeated in the dual-tree approach for all nearest neighbor search \cite[Theorem~3.1]{ram2009linear}.
\medskip

\noindent
Another forthcoming paper based on the PhD thesis \cite{2205.10194} studies a new compressed cover tree that overcomes the past obstacles in \cite[Theorem~5]{beygelzimer2006cover},  \cite[Theorem~6]{beygelzimer2006cover} and proves the parameterized near-linear time complexities for the compressed cover tree construction and the $k$-nearest neighbor search for any $k\geq 1$. In \cite[Corollary~3.5.11]{2205.10194} it is shown that a compressed cover tree can be constructed in $O(c(R)^{10} \cdot \log_2(|R|) \cdot |R|)$ and \cite[Theorem~3.7.14]{2205.10194} shows that using compressed cover tree $k$-nearest neighbors of any point $q$ can be found in time  $$O(c(R \cup \{q\})^3 \cdot \log_2(k) \cdot (c(R \cup \{q\})^9 \cdot \log_2(|R|) + k)).$$
\medskip

\noindent
The near-linear complexities above have helped justify the fast neighbor-based algorithms for computing isometry invariants of periodic crystals \cite{widdowson2022average,widdowson2021pointwise,anosova2021introduction,anosova2021isometry,anosova2022algorithms}
\medskip

\noindent
We thank all reviewers for their valuable time and suggestions.

\bibliographystyle{abbrv-doi}

\bibliography{template}
\end{document}